\let\cite\citep
\let\shortcite\citeyearpar
\newtheorem{theorem}{Theorem}[section]
\newtheorem{lemma}[theorem]{Lemma}
\newtheorem{definition}{Definition}[section]
\theoremstyle{remark}
\newtheorem{remark}{Remark}[section]
\newtheorem{example}{Example}[section]
\newcommand{\nobracketcite}[1]{\citeauthor{#1} \citeyear{#1}}
\newcommand{\noindentparagraph}{\paragraph}
\newcommand\class[1]{$\mathsf{#1}$}
\newcommand{\domain}{\Delta}
\newcommand{\sentence}{\Psi}
\newcommand{\uevidence}{\mathcal{U}}
\newcommand{\binevidence}{\mathcal{E}}
\newcommand{\otc}{\bm{\zeta}} % one-type configuration
\newcommand{\otce}{\zeta} % one-type configuration element
\newcommand{\pred}[1]{\mathsf{pred}\left( #1 \right)}
\newcommand{\fomodels}[2]{\mathcal{M}_{#1, #2}}
\newcommand{\weight}{w}
\newcommand{\negweight}{\overline{w}}
\newcommand\wfomc{\mathsf{WFOMC}}
\newcommand\refformula[1]{\ensuremath{{#1}_{\text{ref}}}}
\newcommand\FOtwo{$\text{FO}^2$}
\newcommand\UFOtwo{$\text{UFO}^2$}
\newcommand\FOthree{$\text{FO}^3$}
\newcommand\Ctwo{$\text{C}^2$}
\title{Tractable Weighted First-Order Model Counting \\ with Bounded Treewidth Binary Evidence}
\author[1]{V\'aclav K\r{u}la}\affil{Czech Technical University in Prague, Prague, Czech Republic}
\author[2]{Qipeng Kuang}\affil{The University of Hong Kong, Hong Kong, China}
\author[3]{Yuyi Wang}\affil{CRRC Zhuzhou Institute, Zhuzhou, China}
\author[4]{Yuanhong Wang}\affil{Jilin University, Changchun, China}
\author[1]{Ond\v{r}ej Ku\v{z}elka}
\date{}
\begin{document}

\maketitle

\begin{abstract}
The Weighted First-Order Model Counting Problem (WFOMC) asks to compute the weighted sum of models of a given first-order logic sentence over a given domain. Conditioning WFOMC on evidence---fixing the truth values of a set of ground literals---has been shown impossible in time polynomial in the domain size (unless \class{\#P \subseteq FP}) even for fragments of logic that are otherwise tractable for WFOMC without evidence. In this work, we address the barrier by restricting the binary evidence to the case where the underlying Gaifman graph has bounded treewidth. We present a polynomial-time algorithm in the domain size for computing WFOMC for the two-variable fragments \FOtwo{} and \Ctwo{} conditioned on such binary evidence. Furthermore, we show the applicability of our algorithm in combinatorial problems by solving the stable seating arrangement problem on bounded-treewidth graphs of bounded degree, which was an open problem. We also conducted experiments to show the scalability of our algorithm compared to the existing model counting solvers.
\end{abstract}

\section{Introduction}

The Weighted First-Order Model Counting Problem (WFOMC) asks to compute the weighted sum of models for a given first-order logic sentence over a specified domain, alongside a pair of weighting functions that assign a weight to each model of the sentence. WFOMC serves as a fundamental problem in Statistical Relational Learning \cite{getoor2007introduction} where applications such as Markov Logic Networks \cite{WFOMC-UFO2}, parfactor graphs \cite{poole2003first}, probabilistic logic programs \cite{WFOMC-FO2} and probabilistic databases \cite{VandenBroeck13} can be reduced to WFOMC. Recent work also reveals the potential of WFOMC to contribute to enumerative combinatorics by providing a general framework for encoding counting problems \cite{WFOMC-axioms-recursion}, integer sequences \cite{fluffy}, and computing graph polynomials on specific graphs \cite{WFOMC-polys}.

In WFOMC we usually measure the time complexity in terms of the domain size. The fragments enjoying polynomial time complexity in the domain size are called \emph{domain-liftable} \cite{WFOMC-UFO2}. Previous results have shown that the two-variable fragment (\FOtwo{}) possibly with counting quantifiers (\Ctwo{}) and cardinality constraints is domain-liftable \cite{WFOMC-UFO2,WFOMC-FO2,WFOMC-C2}.

We consider the task of computing WFOMC for \FOtwo{} and \Ctwo{} with \emph{evidence}, the interpretation of certain predicates. For example, consider the sentence
\begin{equation}\label{eq:indset}
\sentence = \forall x \forall y: E(x,y) \to (\lnot I(x) \lor \lnot I(y))
\end{equation}
over the domain $\domain = \{1,2,3,4\}$. An evidence can be the specification that $E(1,2), E(1,3), E(2,3), \\ E(1,4)$ and their reverse atoms are true and other atoms of $E$ are false. In this case, each model is an assignment of $I(1), \cdots, I(4)$ satisfying $\sentence$, which represents an independent set of the graph with vertices $\{1,2,3,4\}$ and undirected edges $(1,2), (1,3), (2,3), (1,4)$. The WFOMC of $\sentence$ with the above evidence over $\domain$ with unit weight for each model equals to the number of independent sets on the graph.
%Conditioning WFOMC on evidence enhances its expressive power and applicability.

It has been shown that \FOtwo{} and \Ctwo{} remain domain-liftable when conditioned on unary evidence (i.e., interpretation of unary predicates) \cite{WFOMC-conditioning,WFOMS-FO2-journal}. In the contrary, conditioning on binary evidence (i.e., interpretation of binary predicates) is \class{\#P}-hard, even for the universally quantified \FOtwo{} sentences \cite{WFOMC-conditioning}.
An intuition of this phenomenon is that the binary evidence breaks the \emph{symmetry} of domain elements. Without binary evidence, domain elements can be partitioned into a fixed number of classes, where elements within the same class are indistinguishable. For example, for the sentence in \cref{eq:indset} conjuncted with $\forall x: \lnot E(x,x)$, elements can be classified based on whether $I(x)$ holds. This symmetry is crucial for current algorithms computing WFOMC of domain-liftable fragments. Unary evidence does not break the symmetry of elements, as they can be eliminated using the technique proposed by \citet[Appendix A]{WFOMS-FO2-journal}, allowing WFOMC to be computed on a first-order sentence without unary evidence.

Little progress has been made to deal with the binary evidence. \citet{WFOMC-binary-evidence-bmf} showed that if the binary evidence can be represented by a binary matrix with low boolean rank, then a boolean matrix factorization \cite{bmf} transfers the binary evidence to unary evidence. However, as stated by the authors, real-world binary matrices are likely to have large boolean rank, thus the use of this approach is limited in reality.

\subsection{Our Contributions}

%In this work, we study the binary evidence whose Gaifman graph, the undirected graph with vertices corresponding to domain elements in which two vertices $a$ and $b$ are connected if and only if there is a binary predicate $P$ such that the truth value of $P(a,b)$ or $P(b,a)$ is specified in the binary evidence, is of bounded treewidth.
%Our primary result is to show that computing WFOMC for \FOtwo{} and \Ctwo{} with such binary evidence can be done in time polynomial in the domain size. Our approach applies to counting problems in combinatorics on bounded-treewidth graphs by encoding the problem by WFOMC of a fixed \FOtwo{} or \Ctwo{} sentence with such binary evidence.
We propose an algorithm to compute WFOMC for an \FOtwo{} or \Ctwo{} sentence with cardinality constraints, unary evidence and binary evidence whose Gaifman graph, the undirected graph with vertices corresponding to domain elements in which two vertices $a$ and $b$ are connected if and only if there is a binary predicate $P$ such that the truth value of $P(a,b)$ or $P(b,a)$ is specified in the binary evidence, is of bounded treewidth. The algorithm runs in time polynomial in the domain size. This offers a new approach to overcome the symmetry limitation in lifted inference, and expands the expressive power and applicability of WFOMC.
The key idea of our algorithm is the dynamic programming for computing WFOMC for the universally quantified \FOtwo{} sentence with evidence on the tree decomposition of the underlying Gaifman graph. Existing techniques help to extend the result to \FOtwo{} and \Ctwo{} with cardinality constraints.

Experiments are conducted to test the performance of our algorithm compared to two existing model counters on two problems, the friends and smokers problem and the inference on Watts-Strogatz graphs. Results indicate the efficiency and scalability of our algorithm.

Finally, our approach applies to combinatorial problems on bounded treewidth graphs by solving an open problem \cite{Roger_2023}: the stable seating arrangement problem of fixed number of classes on graphs of bounded treewidth and bounded degree. We show that counting the number of stable seating arrangements in such case is in time polynomial in the number of agents.

\begin{remark}
A similar result is Courcelle's theorem \cite{Courcelle90} which states that the model checking problem of monadic second order logic on a bounded treewidth graph is in polynomial time. We remark that our result and Courcelle's theorem are incomparable. First, as Courcelle's theorem works with {\em monadic second order logic of graphs}, it is required that the interpretation to all but unary predicates should be given. In contrast, we allow binary predicates to remain uninterpreted. Second, our result is restricted to the two-variable fragment, which is a necessary consequence of the negative result showing intractability of WFOMC for \FOthree{} \citep{WFOMC-FO3}, while Courcelle's theorem can involve arbitrary number of variables because it works in a different setting.
\end{remark}

\section{Preliminaries}

\subsection{First-Order Logic}
\label{subsec:fol}

We work with the function-free fragments of first-order logic (FOL).
%A language $\mathcal{L}$ consists of a finite set of constants $\Delta$, a finite set of variables $\mathcal{V}$, and a finite set of predicates $\mathcal{P}$.
%We denote a predicate $P$ of arity $k$ as $P/k$.
Let $\mathcal{P}$ be the vocabulary of \emph{predicates} (also called \emph{relations}).
An \emph{atom} has the form $P(t_1, t_2, \cdots, t_k)$ where $P \in \mathcal{P}$ is a predicate and $t_i$ is a logical \emph{variable} or a \emph{constant}.
A \emph{literal} is an atom or its negation.
A \emph{formula} is a literal, or created by connecting formulas using negation, conjunction or disjunction, or by quantifying a formula using a universal quantifier $\forall x$ or an existential quantifier $\exists x$ where $x$ is a logical variable.
A \emph{ground} formula is a formula containing no variables.
A variable in a formula is called \emph{free} if there is no quantification over that variable.
A \emph{sentence} is a formula with no free variables.

%We adopt the Herbrand semantics ***REF*** with a finite domain.
%We denote the domain with $\Delta$ as there is a one to one correspondence to the set of constants.
%We denote the Herbrand base by HB.
%We use $\omega$ to denote a possible world, i.e., any subset of HB.

%We work with the first-order logic fragment with at most two variables, called \FOtwo{}.
%This fragment allows for lifted inference.

% A first-order logic formula in which all variables are substituted by constants in the domain is called \emph{ground}.
A \emph{possible world} $\omega$ interprets each relation in a sentence over a finite domain, represented by a set of ground literals. We write $\omega \models \alpha$ to denote that the formula $\alpha$ is true in $\omega$ following the standard FOL semantics. The possible world $\omega$ is a \emph{model} of a sentence $\sentence$ if $\omega \models \sentence$. We denote the set of all models of a sentence $\sentence$ over the domain $\domain$ by $\fomodels{\sentence}{\domain}$.

In this paper, we are specially interested in the following syntactic fragments of FOL. A sentence with at most two logical variables is called an \FOtwo{} sentence. An \FOtwo{} sentence with only universal quantifiers is called a \UFOtwo{} sentence. An \FOtwo{} sentence with \emph{counting quantifiers} $\exists_{=k}$, $\exists_{\le k}$ and $\exists_{\ge k}$ is called a \Ctwo{} sentence, restricting that the number of assignments of the quantified variable satisfying the subsequent formula is exactly $k$, at most $k$ or at least $k$, respectively. A sentence can be possibly augmented with \emph{cardinality constraints}, which are expressions of the form of $|P| \bowtie k$ where $P$ is a predicate and $\bowtie$ is a comparison operator $\{<, \le, =, \ge, >\}$. We view cardinality constraints as atomic formulas that are satisfied if the number of true ground atoms of $P$ in a possible world meets the constraint.

\subsection{Weighted First-Order Model Counting}

The \textit{weighted first-order model counting (WFOMC)} problem takes the input consisting of a first-order sentence $\sentence$, a domain $\domain$ of size $n$, and a pair of weighting functions $(\weight, \negweight)$ that both map $\mathcal{P}$ to real weights.
Given a set $L$ of literals whose relations are in $\sentence$, the weight of $L$ is defined as
\begin{equation*}
W(L, \weight, \negweight):= \prod_{l \in L_T}\weight(\pred{l}) \cdot \prod_{l \in L_F}\negweight(\pred{l}),
\end{equation*}
where $L_T$ (resp. $L_F$) denotes the set of positive (resp. negative) literals in $L$, and $\pred{l}$ maps a literal $l$ to its corresponding relation name. We omit the symbols $\weight, \negweight$ and write $W(L)$ in short when the weighting functions are clear in the text.

\begin{example}\label{exmp:model-weight}
Let $\mathcal{P} = \{R, S\}$. Consider
%the sentence $\sentence = \forall x \forall y (R(x) \lor S(x,y))$ and
the weighting functions $\weight(R) = 2,\ \weight(S) = 3,\ \negweight(R) = \negweight(S) = 1$.
The weight of the literal set
\begin{equation*}
L = \{R(1), \lnot R(2), S(1,1), \lnot S(1,2), S(2,1), S(2,2)\}
\end{equation*}
is
\begin{equation*}
  \weight(R) \cdot \left(w(S)\right)^3 \cdot \negweight(R) \cdot \negweight(S) = 54.
\end{equation*}
\end{example}

\begin{definition}[Weighted First Order Model Counting]\label{def:wfomc}
    The WFOMC of a first-order sentence $\sentence$ over a finite domain $\domain$ under weighting functions $\weight, \negweight$ is
    \begin{equation*}
        \wfomc(\sentence, \domain, \weight, \negweight) := \sum_{\mu \in \fomodels{\sentence}{\domain}} W(\mu, \weight, \negweight).
    \end{equation*}
\end{definition}

% \lucien{slightly reworded the remark below}
We remark that as the weighting functions are defined in terms of relations, all positive ground literals of the same relation get the same weights, and so do all negative ground literals of the same relation, i.e., the weights are \emph{symmetric} w.r.t. domain elements.
%Thus, \cref{def:wfomc} is also called \emph{symmetric} weighted first-order model counting.
The symmetries of weights are crucial for efficient model counting as we shown later.
For asymmetric WFOMC where each ground literal can have its own weight, we refer the reader to \citet{WFOMC-FO3}, where it was shown that asymmetric WFOMC is \class{\#P}-hard even for \UFOtwo{} sentences.

\begin{example}
Consider the sentence $\sentence = \forall x \forall y: R(x) \lor S(x,y)$ and the weighting functions in \Cref{exmp:model-weight} over the domain $\domain = \{1, 2, \cdots, n\}$. Then
\begin{equation*}
\wfomc(\sentence, \domain, \weight, \negweight) = \left(2^{2n+1} + 3^n\right)^n.
\end{equation*}

In fact, for each domain element $i \in \domain$, either $R(i)$ is true $S(i,j)$ is not limited for any $j \in \domain$ which contributes weight $2 \cdot (3+1)^n$, or $R(i)$ is false and $S(i,j)$ is true for all $j \in \domain$ which contributes weight $3^n$. Multiplying the contributed weight of each element, we get the above value.
\end{example}

\subsection{1-Types and 2-Tables}

We will need the following notions to describe the algorithms.

A set of literals is \emph{maximally consistent} if it does not contain both an atom and its negation at the same time, and cannot be extended to a larger consistent set.

\begin{definition}[1-Type]
A \emph{1-type} of a first-order sentence $\sentence$ is a maximally consistent set of literals formed from predicates in $\sentence$ where each literal contains a single variable $x$.
\end{definition}

\begin{definition}[2-Table]
A \emph{2-table} of a first-order sentence $\sentence$ is a maximally consistent set of literals formed from predicates in $\sentence$ where each literal uses both the variables $x, y$.
\end{definition}

We alternatively write a 1-type (and a 2-table) as a conjunction of literals that it contains.
For example, $\sentence = \forall x \forall y: S(x,y) \lor R(y)$ has four 1-types: $S(x,x) \land R(x)$, $S(x,x) \land \lnot R(x)$, $\lnot S(x,x) \land R(x)$ and $\lnot S(x,x) \land \lnot R(x)$, and four 2-tables: $S(x,y) \land S(y,x)$, $S(x,y) \land \lnot S(y,x)$, $\lnot S(x,y) \land S(y,x)$ and $\lnot S(x,y) \land \lnot S(y,x)$. Intuitively, a 1-type interprets unary and reflexive binary relations for a single domain element, and a 2-table interprets binary relations for a pair of distinct domain elements.

\begin{definition}[1-Type Configuration]
Let $C = \{C_1, C_2, \cdots, C_p\}$ be the set of possible 1-types of a sentence $\sentence$.\footnote{We call a 1-type possible in $\sentence$ if it can be instantiated in some model of $\sentence$, which was also called valid in literatures~\cite{WFOMC-FO2-faster,Complexity-C2}.} A \emph{1-type configuration} w.r.t. $\sentence$ is a vector $\otc(\domain) = (\otce_1, \cdots, \otce_p)$ indicating the number of elements partitioned to each 1-type over the domain $\domain$.
\end{definition}

We often omit $\domain$ for simplicity if it is clear in the context.

%We also need a \emph{2-table configuration} to indicate the number of each 2-table between $e$ and elements with each 1-type in $\domain$ excluding $e$. A 2-table configuration of an element $e$ under a 1-type configuration $\otc(\domain)$ ($e \not \in \domain$) is a vector of 1-type configurations $\ttc^e = (\ttc^e_1, \cdots, \ttc^e_U)$ where $\ttc^e_i = (\ttce^e_{i,1}, \cdots, \ttce^e_{i,B})$ satisfying that $\sum_{j=1}^B \ttce^e_{i,j} = \otce_i$.

\subsection{Data Complexity of WFOMC}\label{sec:data-complexity}

We consider the \emph{data complexity}, which measures the runtime of WFOMC in terms of the size of the domain, regarding the sentence and the weighting functions as fixed. The fragments whose WFOMC have polynomial time data complexity are called \emph{domain-liftable}.

The fragment \UFOtwo{} was shown to be domain-liftable by \citet{WFOMC-UFO2} and \citet{WFOMC-FO3}.
%Later \citet{}
We briefly introduce the latter algorithm whose notations will be used in the rest of the paper.

Consider $\wfomc(\sentence, \domain, \weight, \negweight)$ for a \UFOtwo{} sentence in the \emph{prenex normal form} $\sentence = \forall x \forall y \ \psi(x,y)$ where $\psi(x,y)$ is a quantifier-free \FOtwo{} sentence. Suppose that $\domain = \{1, 2, \cdots, n\}$. Let $\refformula{\psi}(x,y)=\psi(x,y)\land \psi(y,x)$.
Then $\sentence$ can be expanded as conjunction of ground formulae over $\domain$:
\begin{equation}\label{eq:D}
  \begin{aligned}
    \sentence = \left( \bigwedge_{i \in \domain} \psi(i,i) \right) \land \left( \bigwedge_{i,j \in \domain: i<j} \refformula{\psi}(i,j) \right).
  \end{aligned}
\end{equation}

Let $C = \{C_1, C_2, \cdots, C_p\}$ be the set of possible 1-types of $\sentence$. Suppose that the 1-type of element $i$ is determined as $\tau_i$ ($\tau_i \in C$). After substituting unary and reflexive binary literals in $\refformula{\psi}(i,j)$ with true or false according to $\tau_i$ and $\tau_j$, the formula for a pair of elements $(i,j)$ does not have common ground literals with any other pair, hence the 2-tables between each pair of elements can be selected independently.
Let $D$ be the set of 2-tables of $\sentence$, and define $D_{s,t} = \{\pi\in D: s(a) \land t(b) \land \pi(a,b) \models \refformula{\psi}(a,b)\}$ for each $s, t \in C$.
The WFOMC can be computed as:
\begin{equation}\label{eq:basicwfomc1}
  \begin{aligned}
    % & \wfomc(\sentence, \domain, \weight, \negweight) = \\
    % &\qquad
    \sum_{\tau_1, \dots, \tau_n\in C} \ \prod_{i \in \domain} W(\tau_i) \prod_{i,j \in \domain: i<j} r_{\tau_i,\tau_j},
  \end{aligned}
\end{equation}
where
\begin{equation}\label{eq:def_r}
  r_{s, t} = \sum_{\pi \in D_{s, t}} W(\pi).
\end{equation}

Furthermore, from \cref{eq:basicwfomc1} we know that the computation of $\wfomc(\sentence, \domain, \weight, \negweight)$ only depends on the numbers of elements assigned to each 1-type. Therefore, we compute \cref{eq:basicwfomc1} in time polynomial in $n$ by enumerating the 1-type configuration instead of the 1-types for each element:
\begin{equation}\label{eq:basicwfomc2}
  \begin{aligned}
    \wfomc(\sentence, \domain, \weight, \negweight) = \sum_{\otce_1+\cdots+\otce_p=n} \Bigg(\frac{n!}{\otce_1! \cdots \otce_p!}
      \cdot \prod_{i=1}^p (W(C_i))^{\otce_i} \left(r_{i,i}\right)^{\binom{\otce_i}{2}} \prod_{1 \le i < j \le p} \left(r_{i,j}\right)^{\otce_i\otce_j}\Bigg).
  \end{aligned}
\end{equation}

\noindent Recently, \citet{DBLP:conf/aaai/TothK23} proposed a dynamic programming version of this approach, which partially inspires our dynamic programming algorithm.

\subsection{Evidence}
\label{subsec:evidence}

In real-world applications, we often care about WFOMC conditioned on some set of ground literals, called \emph{evidence}.
Evidence can be in either the \emph{open-world} or the \emph{closed-world} form.
% The \emph{evidence} specifies the truth of ground atoms. We use evidence in both the \emph{closed-world} form and the \emph{open-world} form.

\begin{definition}[Open-World Evidence]
The \emph{open-world evidence} is a consistent set of ground literals $\binevidence$, where the truth of each ground atom is restricted by its positivity or negativity in $\binevidence$.
Ground atoms not in $\binevidence$ are considered unknown (i.e., the truth is not restricted).
\end{definition}

\begin{definition}[Closed-World Evidence]
The \emph{closed-world evidence} of a set of predicates $\mathcal{P'} \subseteq \mathcal{P}$ is a set of ground atoms $\binevidence$ formed from $\mathcal{P'}$.
Ground atoms of the predicates in $\mathcal{P'}$ appearing in $\binevidence$ are supposed to be true, while those not appearing in $\binevidence$ are supposed to be false.
\end{definition}

\begin{remark}
  The closed-world notion differs from the closed-world assumption in probabilistic databases~\cite{probabilistic-databases}, where all ground atoms not appearing in the database are assumed to be false. In our case, only the ground atoms of the predicates in the evidence are assumed to be false, while those of other predicates are not restricted.
\end{remark}

The two forms of the evidence can be transformed to each other. Transforming the closed-world representation to the open-world representation is straightforward by adding the default negative ground literals to the evidence set. For instance, the closed-world evidence $\binevidence = \{R(1,2), R(2,2)\}$ of the predicate set $\mathcal{P'}=\{R\}$ over the domain $\domain = \{1,2\}$ corresponds to the open-world evidence $\binevidence' = \{\lnot R(1,1), R(1,2), \lnot R(2,1), R(2,2)\}$.
On the other hand, for example, consider the open-world evidence $\binevidence = \{R(1,2), R(2,3), \lnot R(3,4)\}$.
%indicates that $R(1,2)=R(2,3)=\top$, $R(3,4)=\bot$, and other ground atoms are unknown.
To transform $\binevidence$ to the closed-world form, we introduce two fresh binary predicates $R^{\top}$ and $R^{\bot}$, add the sentence $\forall x \forall y: (R^{\top}(x,y) \to R(x,y)) \land (R^{\bot}(x,y) \to \lnot R(x,y))$ and write $\binevidence' = \{R^{\top}(1,2), R^{\top}(2,3), R^{\bot}(3,4)\}$ which is in the closed-world form.

The data complexity of WFOMC is extended to the case with evidence, where the evidence is treated as another input (besides the domain size) of the problem.
Note that the number of ground literals in the evidence is polynomial in the domain size.
Given a class $\mathfrak{E}$ of evidence, we say that a first-order fragment $\mathfrak{L}$ with $\mathfrak{E}$ is \emph{domain-liftable} if the WFOMC conditioned on any evidence in $\mathfrak{E}$, i.e., WFOMC of $\sentence \land \binevidence$ for every $\sentence \in \mathfrak{L}$ and every $\binevidence \in \mathfrak{E}$, can be computed in time polynomial in the domain size.

\emph{Unary} and \emph{binary} evidence are two particular classes of evidence, where unary evidence consists of ground literals (or atoms) containing a \emph{single} element in the domain, and binary evidence consists of ground literals (or atoms) containing two \emph{distinct} elements.
For example, unary evidence can contain $R(1,1)$ though $R$ is a binary predicate.

In this paper, we push the boundaries of data-liftability to the case where the underlying Gaifman graph of the binary evidence has bounded treewidth.

\begin{definition}[Gaifman Graph]
The \emph{Gaifman graph} with respect to the domain $\domain$ and the closed-world binary evidence $\binevidence$ is an undirected graph $G_{\domain,\binevidence}$ of $n$ vertices, in which there is an edge between $a$ and $b$ if and only if $\binevidence$ contains $P(a,b)$ or $P(b,a)$ for some $P \in \mathcal{P}$.
\end{definition}

\Cref{fig:example-tree-decomp} shows an example of the Gaifman graph of the closed-world binary evidence $\{R(a,b), \\ R(a,c), R(b,c), R(c,d), R(c,e), R(d,g), R(e,g)\}$.

\begin{figure}[tbp]
  \centering
  \includegraphics[width=0.4\textwidth]{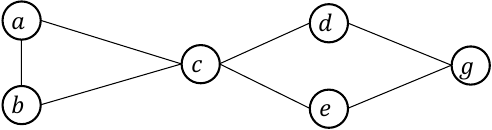}
  \includegraphics[width=0.47\textwidth]{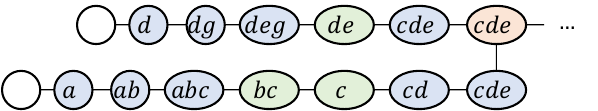}
  \caption{An example of the Gaifman graph (left) of closed-world binary evidence and its nice tree decomposition (right). The leaf, introduce, forget and join nodes are colored in white, blue, green and orange respectively. The last forget nodes are omitted for simplicity.}
  \label{fig:example-tree-decomp}
\end{figure}

For ease of presentation, we will alternatively write the evidence $\binevidence$ as a conjunction of ground literals according to its semantics, i.e., $\binevidence = \bigwedge_{l \in \binevidence} l$ for open-world evidence and $\binevidence = \bigwedge_{l \in \binevidence} l \land \bigwedge_{l \notin \binevidence: \mathsf{pred}(l) \in \mathcal{P}'} \lnot l$ for closed-world evidence, where $\mathsf{pred}(l)$ maps a literal $l$ to its corresponding relation name.
% \lucien{@Qipeng: please check this aligns with our approach.}

% \begin{definition}[Open-World Evidence]
% The \emph{open-world unary evidence} is a set of ground literals $\uevidence$ of unary predicates. The \emph{open-world binary evidence} is a set of ground literals $\binevidence$ of binary predicates. Ground atoms not appearing in the open-world evidence are not restricted.
% \end{definition}

% \begin{remark}
% The two forms of the evidence can be transformed to each other. Transforming the closed-world representation to the open-world representation is straightforward by adding the default negative ground literals to the evidence set. For instance, the closed-world binary evidence $\binevidence = \{R(1,2), R(2,2)\}$ of the predicate set $\mathcal{P'}=\{R\}$ over the domain $\domain = \{1,2\}$ corresponds to the open-world binary evidence $\binevidence' = \{\lnot R(1,1), R(1,2), \lnot R(2,1), R(2,2)\}$.

% On the other hand, for example, the open-world binary evidence $\binevidence = \{R(1,2), R(2,3), \lnot R(3,4)\}$ indicates that $R(1,2)=R(2,3)=\top$, $R(3,4)=\bot$, and other ground atoms are unknown. To transform $\binevidence$ to the closed-world form, we can introduce two fresh binary predicates $R^{\top}$ and $R^{\bot}$, add the sentence
% \begin{equation*}
%   \forall x \forall y \left( (R^{\top}(x,y) \to R(x,y)) \land (R^{\bot}(x,y) \to \lnot R(x,y)) \right)
% \end{equation*}
% and write $\binevidence' = \{R^{\top}(1,2), R^{\top}(2,3), R^{\bot}(3,4)\}$ which is the closed-world binary evidence.
% \end{remark}

\subsection{Treewidth and Tree Decompositions}

We refer to the common definition of tree decomposition (e.g., \nobracketcite{param-complexity}).

\begin{definition}[Tree Decomposition]\label{def:tree-decomp}
A \emph{tree decomposition} of a graph $G(V_G, E_G)$ is a tree $T(V_T, E_T)$ where each node $u \in V_T$ holds a \emph{bag} $B_u \subseteq V_G$ and the tree satisfies the following conditions:
\begin{itemize}
    \item $\bigcup_{u \in V_T} B_u = V_G$.
	\item For each edge $(a, b) \in E_G$, there is a tree node $u \in V_T$ such that $a, b \in B_u$.
	\item For each vertex $a \in V_G$, the nodes $u$'s with $a \in B_u$ form a connected component in $T$.
\end{itemize}
\end{definition}

The \emph{width} of a tree decomposition is $\max_{u \in T}\{|B_u|\} - 1$.
The \emph{treewidth} of a graph $G$ is the lowest width among its possible tree decompositions.

The following form of tree decomposition will be used to simplify the description of the algorithm.

\begin{definition}[Nice Tree Decomposition]\label{def:nice-tree-decomp}
For a graph of treewidth $k$, its tree decomposition $T(V_T, E_T)$ of width $k$ is \emph{nice} if a root node $root \in V_T$ can be chosen such that the tree rooted at $root$ satisfies:
\begin{itemize}
  \item Each node has at most two children.
  \item $B_{root} = \emptyset$ and $B_l = \emptyset$ for every leaf $l$.
  \item Each non-leaf node is of one of the following three types:
  \begin{itemize}
    \item An \emph{introduce node} $u$ has exactly one child $v$ such that $B_u = B_v \cup \{a\}$ for some $a \notin B_v$.
    \item A \emph{forget node} $u$ has exactly one child $v$ such that $B_u = B_v \setminus \{a\}$ for some $a \in B_v$.
    \item A \emph{join node} $u$ has exactly two children $v_1, v_2$ such that $B_u = B_{v_1} = B_{v_2}$.
  \end{itemize}
\end{itemize}
\end{definition}

An example of the nice tree decomposition is shown in \Cref{fig:example-tree-decomp}, where the treewidth is 2.
Together with \citeauthor{Bodlaender93}’s work \shortcite{Bodlaender93}, it was shown that such a tree decomposition can be computed efficiently.

\begin{lemma}{\emph{\cite{param-complexity,Bodlaender93}}}
\label{lemma:tree-decomp}
For a graph $G(V_G, E_G)$ of treewidth $k$, its nice tree decomposition of size $O(k\cdot |V_G|)$ can be computed in time linear in $|V_G|$.
\end{lemma}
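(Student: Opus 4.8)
The plan is to prove the two halves of the statement separately: first obtain \emph{some} width-$k$ tree decomposition of $G$ in linear time, and then massage it into nice form without increasing the width while keeping the total size linear. For the first half I would invoke the cited linear-time algorithm of \citeauthor{Bodlaender93}, which for every fixed $k$ produces a tree decomposition of width $k$ (whenever the treewidth is at most $k$) in time $O(|V_G|)$, with the hidden constant depending only on $k$. I treat this as a black box, since re-deriving it is far beyond the scope here; it is the genuinely hard ingredient and the sole source of the superpolynomial dependence on $k$. As a cleanup step I would replace the raw decomposition by an equivalent one with few nodes: repeatedly contract any tree edge $(u,v)$ with $B_u \subseteq B_v$, merging $u$ into the larger bag. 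This preserves all three tree-decomposition conditions, and once no adjacent pair of bags is comparable under inclusion, a standard counting argument bounds the node count by $|V_G|$.

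The second half is the transformation to niceness, which I would carry out explicitly as a sequence of local edits, each adding only $O(k)$ nodes per original tree edge. First root the tree arbitrarily. Then \emph{binarize}: any node with $c>2$ children is replaced by a left-leaning chain of $c-1$ join nodes all carrying the same bag; this keeps the width and adds $O(c)$ nodes, and since the children counts sum to the number of edges, binarization adds only $O(|V_G|)$ nodes in total. Next, for every surviving parent--child pair $(u,v)$ whose bags differ, insert between them a chain that first \emph{forgets} the elements of $B_u \setminus B_v$ one at a time and then \emph{introduces} the elements of $B_v \setminus B_u$ one at a time; each such chain has length at most $|B_u \setminus B_v| + |B_v \setminus B_u| \le 2(k+1)$, and one checks directly that every intermediate bag still satisfies the edge-covering and connectivity conditions. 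For each join node one additionally inserts short copy-chains so that its two children carry exactly its bag, again at cost $O(k)$ each. Finally, attach at the root a chain of forget nodes shrinking its bag down to $\emptyset$, and below each leaf a chain of introduce nodes building its bag up from $\emptyset$, which enforces $B_{root}=\emptyset$ and empty leaf bags as required.

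For the size and time bounds I would simply tally the edits. The base decomposition has $O(|V_G|)$ nodes and hence $O(|V_G|)$ edges; each edge contributes at most $O(k)$ introduce/forget nodes, binarization contributes $O(|V_G|)$ join nodes, and the root and leaf chains contribute $O(k \cdot |V_G|)$ nodes in total, yielding overall size $O(k \cdot |V_G|)$. Every edit is a local rewrite touching $O(k)$ nodes and is applied a linear number of times, so the whole transformation runs in time $O(k \cdot |V_G|)$, which for fixed $k$ is linear in $|V_G|$; combined with the linear-time cost of the black-box decomposition, this gives the claim.

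I expect the main obstacle to be entirely concentrated in the cited first half—\citeauthor{Bodlaender93}'s linear-time \emph{exact} treewidth algorithm, which is genuinely intricate and which I would not attempt to reprove. By contrast, the niceness transformation, although tedious, amounts to a routine verification that each local edit preserves the three defining conditions of \Cref{def:tree-decomp} and that the bookkeeping of added nodes stays within $O(k \cdot |V_G|)$.
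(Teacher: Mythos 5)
Your proposal is correct and takes the same route as the paper, which offers no proof of its own: it simply cites \citeauthor{Bodlaender93}'s linear-time algorithm for producing a width-$k$ decomposition and the standard conversion to nice form from the parameterized-complexity literature, i.e., exactly the two-stage argument you reconstruct. One minor orientation slip: reading a child--parent pair $(v,u)$ upward, the inserted chain should \emph{forget} the elements of $B_v \setminus B_u$ and then \emph{introduce} those of $B_u \setminus B_v$ (you swapped the two sets), but this is a labelling issue under \Cref{def:nice-tree-decomp}, not a gap in the construction.
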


% We will make constraints on the treewidth of the following graph of binary evidence.

% \begin{definition}[Gaifman Graph]
% The \emph{Gaifman graph} with respect to the domain $\domain$ and the binary evidence $\binevidence$ is an undirected graph $G_{\domain,\binevidence}$ of $n$ vertices, in which there is an edge between $a$ and $b$ if and only if $\binevidence$ contains $P(a,b)$ or $P(b,a)$ for some $P \in \mathcal{P}$.
% \end{definition}

\subsection{Markov Logic Networks}
Markov Logic Networks (MLNs) \citep{richardson:mlns} is a model used in the area of statistical relational learning.
We will use MLNs to encode the Watts-Strogatz model later in the experiments.

An MLN $\Phi$ is a set of weighted quantifier-free first-order logic formulas $\alpha_1, \cdots, \alpha_k$ with weights $w_1, \cdots, w_k$ taking on values from the real domain or infinity:
\begin{equation*}
  \Phi = \{(w_1, \alpha_1), (w_2, \alpha_2), \cdots, (w_k, \alpha_k)\}.
\end{equation*}

Given a domain $\Delta$, the MLN defines a probability distribution over possible worlds such that
\begin{equation*}
P_{\Phi, \Delta}(\omega) = \begin{cases}
\frac{\exp{\left(\sum_{\substack{(w_i, \alpha_i)\in \Phi_\mathbb{R}}}w_i \cdot N(\alpha_i, \omega)\right)}}{Z}, & \text{if } \omega \models \Phi_{\infty} \\
0, & \text{otherwise}
\end{cases}
\end{equation*}
where $\Phi_\mathbb{R}$ denote the real-valued (soft) weight-formula pairs, $\Phi_{\infty}$ the $\infty$-valued (hard) weight-formula pairs, $Z$ is the normalization constant ensuring valid probability values, and $N(\alpha_i, \omega)$ is the number of substitutions for variables in $\alpha_i$ by constants that produce a ground formula satisfied in the world $\omega$.
The distribution formula is equivalent to the one of a Markov Random Field \citep{koller:mrf}. Hence, an MLN, along with a domain, defines a probabilistic graphical model, and inference in the MLN is thus inference over that model.

Inference (and also learning) in MLNs is reducible to WFOMC \cite{WFOMC-FO2}.
For each $(w_i, \alpha_i(\mathbf{x}_i)) \in \Phi_{\mathbb{R}}$, where $\mathbf{x}_i$ is the free variables, introduce a new formula
$\forall \mathbf{x}_i: \xi_i(\mathbf{x}_i) \leftrightarrow \alpha_i(\mathbf{x}_i)$,
where $\xi_i$ is a fresh predicate, $w(\xi_i) = \exp(w_i)$ and $\overline{w}(\xi_i) = 1$. Let $w(Q) = \overline{w}(Q) = 1$ for all other
predicates $Q$.
Consider sentence $\Gamma$ created by the above procedure, with hard formulas added to $\Gamma$ as they are.
We can then compute the probability of a query $\phi$ as
$P_{\Phi, \Delta} (\phi) = \frac{\wfomc(\Gamma \wedge \phi, \Delta, w, \overline{w})}{\wfomc(\Gamma, \Delta, w, \overline{w})},$
where $\phi$ is a set of constraints that can be both hard and soft.

\section{Approach}

We describe a novel algorithm that given a sentence $\sentence$ in the two-variable fragment, a finite domain $\domain$ of size $n$, the weighting functions $\weight, \negweight$, the \emph{closed-world} unary evidence $\uevidence$ and binary evidence $\binevidence$ whose Gaifman graph $G_{\domain,\binevidence}$ has treewidth $k$ for some constant $k$ (which we refer to as \emph{bounded-treewidth binary evidence}), it computes $\wfomc(\sentence \land \uevidence \land \binevidence, \domain, \weight, \negweight)$ in time polynomial in $n$.
The open-world evidence is handled by transforming it to the closed-world form as described in \Cref{subsec:evidence}.

%We make further assumptions on the problem.
We start by providing an algorithm for the universally quantified two-variable fragment \UFOtwo{}. At the end of the section, we use existing reduction techniques to extend the result to \FOtwo{} and \Ctwo{} with cardinality constraints.

We assume that the nice tree decomposition $T(V_T, E_T)$ of the Gaifman graph $G_{\domain,\binevidence}$ is given, which has a root $root \in V_T$ and size $O(kn)$.
By \Cref{lemma:tree-decomp}, such a tree decomposition can be computed in time linear in $n$.
% we can compute a nice tree decomposition $T(V_T, E_T)$ of size $O(kn)$ with a root $root \in V_T$ for the Gaifman graph $G_{n,\binevidence}$ in time linear in $n$.

Let $C = \{C_1, C_2, \cdots, C_p\}$ be the set of possible 1-types of $\sentence$, and let $D$ be the set of 2-tables of $\sentence$. Define $D_{s,t}$ for each $s,t \in C$ as in \Cref{sec:data-complexity}.
%Suppose that the 1-type of element $i$ is determined as $\tau_i$ ($\tau_i \in C$).
% We write $\binevidence[\domain']$ for the partial binary evidence $\binevidence$ with respect to a subset $\domain' \subseteq \domain$, which is the subset of $\binevidence$ that only contains ground atoms with constants in $\domain'$.
When we talk about the satisfaction of evidence with respect to a subset $\domain'\subseteq\domain$, we only consider the truth of ground atoms that only contain constants in $\domain'$.

\subsection{The Recursion Framework}

% \todo{examples}

In this subsection, we present the algorithm for the \UFOtwo{} fragment with unary evidence $\uevidence$ and binary evidence $\binevidence$.

For a node $u \in V_T$, $B_u\subseteq \domain$ is the set of elements in the bag of $u$. Let $B^*_u$ denote the union of $B_v$ for all nodes $v$ in the subtree rooted at $u$, and let $S_u$ denote $B^*_u \setminus B_u$.
Given the 1-types $\tau = \bigwedge_{a \in B_u} \tau_a$ of elements in $B_u$ satisfying $\uevidence$
%the 2-tables $\pi = \bigwedge_{a,b \in B_u} \pi_{a,b}$ between elements in $B_u$,
and the 1-type configuration $\otc = \otc(S_u)$ for elements in $S_u$, define a \emph{partial model} of $u$ as a set of ground literals consisting of the followings:
\begin{itemize}
  \item The model of $\sentence$ over $S_u$ satisfying $\uevidence$ and $\binevidence$ and consistent with $\otc$ (i.e., there are $\otce_i$ elements having the 1-type $C_i$ for each $1\le i\le p$), and
  \item The 2-tables between $B_u$ and $S_u$ satisfying $\binevidence$.
\end{itemize}

Let $\mathcal{L}_{u, \tau, \otc}$ be the set of possible partial models of $u$ with respect to $\tau, \otc$.
Define $f(u,\tau,\otc)$ as the sum of weights of partial models in $\mathcal{L}_{u, \tau, \otc}$, i.e., $f(u,\tau,\otc) = \sum_{L \in \mathcal{L}_{u, \tau, \otc}} W(L)$.
Then our target is to compute
\begin{equation*}
  \wfomc(\sentence \land \uevidence \land \binevidence, \domain, \weight, \negweight) = \sum_{\otce_1 + \cdots + \otce_p = n} f(root, \top, \otc).
\end{equation*}
This is because $B_{root} = \emptyset$ and $S_{root} = \domain$, in which it holds that $\bigcup_{\otce_1 + \cdots + \otce_p = n} \mathcal{L}_{root, \top, \otc} = \fomodels{\sentence \land \uevidence \land \binevidence}{\domain}$.

Next we compute $f(u,\tau,\otc)$ by recursion on the tree decomposition. The computation depends on the type of $u$.
Recall that in \Cref{sec:data-complexity} we write $\sentence$ as $\forall x \forall y \ \psi(x,y)$ for some quantifier-free formula $\psi$ and write $\refformula{\psi}(x,y)=\psi(x,y)\land\psi(y,x)$. As $\binevidence$ is in the close-world representation w.r.t. a set of predicates $\mathcal{P}'$, all atoms of $\mathcal{P'}$ not in $\binevidence$ are supposed to be false. We refine the definition of $r_{s,t}$ in \cref{eq:def_r} to align with $\binevidence$ for element pairs not in $\binevidence$, i.e.,
%as the weighted sum of valid 2-tables between two elements in 1-types $s$ and $t$ respectively, in addition that all binary ground literals of $\mathcal{P'}$ appear negatively, i.e.,
\begin{equation*}
  r_{s,t} = \sum_{\substack{\pi \in D_{s,t}:\ \pi(a,b) \models \bigwedge_{\mathsf{p} \in \mathcal{P'}} \lnot \mathsf{p}(a,b) \land \lnot \mathsf{p}(b,a)}} W(\pi)
\end{equation*}
%\lucien{closed world or open world?}
%\kqp{Closed-world, as stated in the beginning of this section.}
for every pair of 1-types $s,t \in C$.
The original definition of $r_{s,t}$ in \cref{eq:def_r} can be regarded as in the case of $\mathcal{P'} = \emptyset$.

\noindentparagraph{The Leaf Node}
If $u$ is a leaf, we have $B_u = \emptyset$ by \Cref{def:nice-tree-decomp}. In this case, the only valid pair of $(\tau,\otc)$ is $(\top,\bm{0})$. We assign $1$ to the $f$ value:
% as an initial value:
\begin{equation}\label{eq:f-leaf}
  f(u,\top,\bm{0}) = 1.
\end{equation}

\noindentparagraph{The Introduce Node}
Let $v$ be the child node of $u$, and $a$ be the unique element in $B_u \setminus B_v$. Each partial model $L \in \mathcal{L}_{u,\tau,\otc}$ can be decomposed to the following three parts:
\begin{itemize}
  \item A model $\mu \in \fomodels{\sentence \land \uevidence \land \binevidence}{S_u}$ consistent with $\otc$;
  \item The 2-tables $\pi_1$ between $B_u \setminus \{a\}$ and $S_u$;
  \item The 2-tables $\pi_2$ between $a$ and $S_u$.
\end{itemize}

Let $\tau_a$ be the 1-type of $a$. Since $S_u = S_v$, $\mu \land \pi_1$ is a partial model $L' \in \mathcal{L}_{v, \ \tau \setminus \tau_a,\ \otc}$ where $\tau \setminus \tau_a$ means $\tau$ removing $\tau_a$. Therefore, each $L \in \mathcal{L}_{u,\tau,\otc}$ where $a$ has 1-type $\tau_a$ can be regarded as augmenting $\pi_2$ to a partial model in $\mathcal{L}_{v, \ \tau \setminus \tau_a,\ \otc}$. We then have
% \lucien{Why isn't $\otc$ updated here?}
% \kqp{Because $S_u = S_v$.}
\begin{equation}\label{eq:f-forget}
\begin{aligned}
  f(u,\tau,\otc)
  = \left( \prod_{i=1}^p \left(r_{\tau_a, C_i}\right)^{\otce_i} \right) \cdot f(v,\ \tau \setminus \tau_a,\ \otc).
\end{aligned}
\end{equation}
Here, we use the property that $a$ does not connect to any element in $S_u$ in the Gaifman graph, i.e., all ground atoms containing $a$ and elements in $S_u$ should be false in the evidence $\binevidence$.
For example, the element $d$ in the introduce node $cd$ in \Cref{fig:example-tree-decomp} does not connect to any element in $S_{cd} = \{a,b\}$, hence $R(a,d), R(d,a), R(b,d), R(d,b)$ are all false in the evidence.
%where note that the weighted sum of 2-tables between $a$ and elements in $S_u$ is computed using the same idea as in \cref{eq:basicwfomc2}.

\noindentparagraph{The Forget Node}
Let $v$ be the child node of $u$, and $a$ be the unique element in $B_v \setminus B_u$. Note that $S_u = S_v \cup \{a\}$.
%for $u$, element $a$ is in $S_u$ and for $v$, element $a$ is in $B_v$.
Each partial model $L \in \mathcal{L}_{u,\tau,\otc}$ can be decomposed to the following five parts:
\begin{itemize}
  \item The 1-type $\tau_a$ of $a$ satisfying $\uevidence$;
  \item A model $\mu \in \fomodels{\sentence \land \uevidence \land \binevidence}{S_v}$ consistent with $\otc^{-\tau_a}$, where $\otc^{-\tau_a}$ means $\otc$ subtracting $1$ at the position of $\tau_a$;
  \item The 2-tables $\pi_1$ between $a$ and $S_v$;
  \item The 2-tables $\pi_2$ between $B_u$ and $a$;
  \item The 2-tables $\pi_3$ between $B_u$ and $S_v$.
\end{itemize}

As $B_v = B_u \cup \{a\}$, we know that $\mu \land \pi_1 \land \pi_3$ is a partial model $L' \in \mathcal{L}_{v, \ \tau \land \tau_a, \ \otc^{-\tau_a}}$. Therefore, each $L \in \mathcal{L}_{u,\tau,\otc}$ where $a$ has 1-type $\tau_a$ can be regarded as augmenting $\tau_a$ and $\pi_2$ to a partial model in $\mathcal{L}_{v, \ \tau \land \tau_a, \ \otc^{-\tau_a}}$. Let $B_u = \{b_1, \cdots, b_{m}\}$. We can then compute $f(u,\tau,\otc)$ as follows:
\begin{equation}\label{eq:f-introduce}
\begin{aligned}
  f(u,\tau,\otc) = \sum_{\tau_a\! \in C: \tau_a\! \models \uevidence} \ \sum_{\pi_{b_1}\! \in D_{a,b_1,\tau_a,\tau_{b_1}}}\! \cdots\! \sum_{\pi_{b_m}\! \in D_{a,b_m,\tau_a,\tau_{b_m}}}
   \left( W(\tau_a) \cdot \prod_{b \in B_u} W(\pi_{b}) \right) \cdot f(v, \tau \land \tau_a, \ \otc^{-\tau_a}),
\end{aligned}
\end{equation}
where $D_{a,b,s,t}$ is the set of the 2-tables between the elements $a$ and $b$ having the 1-types $s$ and $t$, respectively that satisfy $\binevidence$, i.e.,
% \lucien{do we need to consider the atoms not in $\binevidence$?}
% \kqp{Yes}
$D_{a,b,s,t} = \{\pi \in D_{s,t}: \pi(a,b)\models \binevidence\}$.

\noindentparagraph{The Join Node}
Let $v_1, v_2$ be the children nodes of $u$. Note that $S_u = S_{v_1} \cup S_{v_2}$. For each $L \in \mathcal{L}_{u,\tau,\otc}$, it can be decomposed to the following three parts:
\begin{itemize}
  \item A partial model $L_1 \in \mathcal{L}_{v_1, \tau, \bm{z_1}}$;
  \item A partial model $L_2 \in \mathcal{L}_{v_2, \tau, \bm{z_2}}$ where $(\bm{z_1})_i+(\bm{z_2})_i = \otce_i$ for each $1 \le i \le p$;
  \item The 2-tables between $S_{v_1}$ and $S_{v_2}$.
\end{itemize}

Therefore, we have
\begin{equation}\label{eq:f-join}
\begin{aligned}
  f(u,\tau,\otc) = \sum_{\bm{z_1}+\bm{z_2} = \otc} f(v_1, \tau,\bm{z_1}) \cdot f(v_2, \tau,\bm{z_2})
   \cdot \prod_{i=1}^p \prod_{j=1}^p \left(r_{C_i, C_j}\right)^{(\bm{z_1})_i \cdot (\bm{z_2})_j},
\end{aligned}
\end{equation}
where $\bm{z_1}+\bm{z_2}$ denotes the element-wise addition of the two vectors.
% It is the same idea used in \cref{eq:basicwfomc2} to compute the weighted sum of 2-tables between $S_{v_1}$ and $S_{v_2}$.
In the above equation, we use the same idea as handing the introduce node to compute the weights of 2-tables between $S_{v_1}$ and $S_{v_2}$.

\begin{remark}
In the computation of \cref{eq:f-leaf,eq:f-introduce,eq:f-forget,eq:f-join} we might encounter invalid tuples of $(u,\tau,\otc)$, e.g., $\tau$ not satisfying $\uevidence$ or the tuples where $\sum_{i=1}^p \otce_i \neq |S_u|$. The $f$-values of these tuples are set to $0$ and thus do not affect the computation with valid tuples.
\end{remark}

\subsection{Time Complexity}

We show that computing the recursion of $f$ by \cref{eq:f-leaf,eq:f-introduce,eq:f-forget,eq:f-join} takes time polynomial in $n$. Let $p$ be the number of 1-types of $\sentence$ and $q$ be the number of 2-tables of $\sentence$. Recall that by \Cref{def:tree-decomp}, for every node $u \in V_T$, we have $|B_u| \le k+1$.

\begin{itemize}
  \item For a leaf node, it takes $O(1)$ time since there is only one valid tuple of $(\tau,\otc)$.
  \item For a non-leaf node, the number of valid tuples of $(\tau,\otc)$ is $O\left(p^{k+1} \cdot n^p\right)$. The time of computing \cref{eq:f-introduce,eq:f-forget,eq:f-join} is as follows:
      \begin{itemize}
        \item For an introduce node, computing the weights of 2-tables between $a$ and $S_u$ takes $O(p)$ time, so the total time on the node is $O\left(p^{k+2} \cdot n^p\right)$.
        \item For a forget node, enumerating $\tau_a$ and $\pi_{a,b}$ for all $b \in B_u$ takes $O\left(p \cdot q^{k+1}\right)$ time and the computation after the enumeration takes $O(k)$ time, so the total time on the node is $O\left(p^{k+2} \cdot q^{k+1} \cdot k \cdot n^p\right)$.
        \item For a join node, enumerating $\bm{z_1}, \bm{z_2}$ takes $O(n^p)$ time and computing the weights of 2-tables between $S_{v_1}$ and $S_{v_2}$ takes $O(p^2)$ times, so the total time on the node is $O\left(p^{k+3} \cdot n^{2p}\right)$.
      \end{itemize}
\end{itemize}

By \Cref{lemma:tree-decomp}, the number of nodes on a tree is $O(kn)$ and the tree can be computed in time $O(n)$. Therefore, the total time is $O(n + kn \cdot p^{k+3} \cdot q^{k+1} \cdot k \cdot n^{2p})$. As the sentence $\sentence$ is fixed when we consider data complexity, $p$ and $q$ are also fixed. We then obtain the following theorem.

\begin{theorem}\label{thm:ufo2}
\UFOtwo{} with unary evidence and bounded-treewidth binary evidence is domain-liftable.
\end{theorem}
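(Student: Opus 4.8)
The plan is to reduce \Cref{thm:ufo2} to two claims: (i) the recursion of \cref{eq:f-leaf,eq:f-introduce,eq:f-forget,eq:f-join} correctly computes $f(u,\tau,\otc)=\sum_{L\in\mathcal{L}_{u,\tau,\otc}}W(L)$ for every node $u$ and every valid pair $(\tau,\otc)$, and (ii) the whole computation runs in time polynomial in $n$. Claim (ii) is already established by the node-by-node accounting in the Time Complexity subsection, which bounds the total work by $O(n+kn\cdot p^{k+3}\cdot q^{k+1}\cdot k\cdot n^{2p})$; since $\sentence$ is fixed, $p,q,k$ are constants and this is polynomial in $n$. Together with the top-level identity $\wfomc(\sentence\land\uevidence\land\binevidence,\domain,\weight,\negweight)=\sum_{\otce_1+\cdots+\otce_p=n}f(root,\top,\otc)$, which holds because $B_{root}=\emptyset$, $S_{root}=\domain$, and the union of the $\mathcal{L}_{root,\top,\otc}$ over all 1-type configurations is exactly $\fomodels{\sentence\land\uevidence\land\binevidence}{\domain}$, establishing (i) finishes the proof.

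I would prove (i) by structural induction on $T$, processing nodes from the leaves toward $root$. The base case is the leaf node, where $B_u=\emptyset$ forces $(\tau,\otc)=(\top,\bm 0)$, so $\mathcal{L}_{u,\top,\bm 0}$ contains only the empty partial model of weight $1$, matching \cref{eq:f-leaf}. For each non-leaf node I would verify that the stated decomposition of an arbitrary $L\in\mathcal{L}_{u,\tau,\otc}$ into the listed parts (three parts for an introduce and a join node, five for a forget node, where the forget node also selects the 1-type $\tau_a$ subject to $\tau_a\models\uevidence$, enforcing the unary evidence) is a weight-preserving bijection onto the Cartesian product of (a) a partial model of the child(ren), counted by the inductive value of $f$, and (b) the remaining 2-tables. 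Weight-preservation is exactly the factorization $W(L)=\prod W(\text{part})$ coming from the fact, already used in \cref{eq:D,eq:basicwfomc1}, that the ground literals of distinct element pairs (and the reflexive/unary literals captured by 1-types) are disjoint, so the global weight is the product of per-pair and per-element weights.

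The heart of the argument, and the step I expect to be the main obstacle, is justifying the ``free'' 2-table factors $\prod_{i}(r_{\tau_a,C_i})^{\otce_i}$ in the introduce recurrence and $\prod_{i,j}(r_{C_i,C_j})^{(\bm z_1)_i(\bm z_2)_j}$ in the join recurrence. These are valid only because the relevant pairs of elements have no edge in $G_{\domain,\binevidence}$, so their binary atoms are all false in the closed-world $\binevidence$ and their 2-tables range freely over $D_{s,t}$ with the $\mathcal{P}'$-atoms fixed to false --- which is precisely the refined $r_{s,t}$. I would isolate this as the separator property of tree decompositions: for any node $u$ the bag $B_u$ separates $S_u$ from $\domain\setminus B^*_u$ in the Gaifman graph, and in particular, using the connectedness condition of \Cref{def:tree-decomp}, a freshly introduced element $a\in B_u\setminus B_v$ shares no bag with any element of $S_u=S_v$, while for a join node the sets $S_{v_1}$ and $S_{v_2}$ share no bag; hence neither pair is adjacent. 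By contrast, the bag-to-$a$ 2-tables in the forget node may be constrained by $\binevidence$, which is why the recurrence there ranges over the evidence-filtered sets $D_{a,b,s,t}$ rather than over free 2-tables. Verifying that each decomposition is simultaneously exhaustive and injective, while keeping the evidence constraints bookkept correctly across the reflexive, bag-internal, bag-to-$S$, and $S$-internal literal groups, is the delicate part; once it is done, the induction closes and \Cref{thm:ufo2} follows.
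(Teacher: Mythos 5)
Your proposal is correct and takes essentially the same route as the paper: the same dynamic program over the nice tree decomposition with the same leaf/introduce/forget/join recurrences, the same top-level identity at the root, and the same per-node complexity accounting that yields \Cref{thm:ufo2}. Your explicit structural induction and the separator-property argument (a freshly introduced element shares no bag with $S_u$, and $S_{v_1}$, $S_{v_2}$ share no bag at a join) merely make rigorous what the paper asserts informally when justifying the free $r_{s,t}$ factors.
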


\subsection{Implications of the Result}

Beside \UFOtwo{}, our algorithm also works for other domain-liftable two-variable fragments which have a \emph{modular transformation} to \UFOtwo{} defined by \citet{WFOMC-FO2}. The property says that for any sentence in such fragment, its WFOMC can be obtained from the WFOMC of a \UFOtwo{} sentence over any finite domain.
It was shown that there are modular transformations to \UFOtwo{} for \FOtwo{} \cite{WFOMC-FO2} and \Ctwo{} \cite[Theorem 4]{WFOMC-C2} possibly with cardinality constraints \cite[Proposition 5]{WFOMC-C2}, hence we obtain the following theorem:
%conclude that \FOtwo{} and \Ctwo{}, possibly with cardinality constraints and bounded-treewidth binary evidence are domain-liftable.

%\begin{definition}[Modular Transformation]
%The fragment $\mathcal L$ has a \emph{modular transformation} to \UFOtwo{} if the following holds:
%for any sentence $\sentence \in \mathcal L$ and any weighting functions $\weight, \negweight$, there is a \UFOtwo{} sentence $\sentence'$ and weighting functions $\weight', \negweight'$ such that for any finite domain $\domain$ and any sentence $\phi$, it holds that $\wfomc(\sentence \land \phi, \domain, \weight, \negweight) = \wfomc(\sentence' \land \phi, \domain, \weight', \negweight')$.
%\end{definition}

%\FOtwo{} and \Ctwo{} with only unary evidence are shown to be domain-liftable using the trick of transferring unary evidence to cardinality constraints \cite[Appendix A]{WFOMS-FO2-journal}. However, it is unclear whether this trick works with binary evidence, as the trick depends on the symmetry of elements while binary evidence breaks the symmetry.
%Fortunately, our algorithm incorporates with unary evidence $\uevidence$. For each node $u$, we only consider the 1-types $\tau$ of elements in $B_u$ such that $\tau$ satisfies $\uevidence$. By a simple induction from bottom to top on the tree, we claim that at every node $u$, only the weights of partial models satisfying $\uevidence$ will be counted.

%Combining these two implications, we obtain the following theorem.

\begin{theorem}
\FOtwo{} and \Ctwo{}, possibly with cardinality constraints, unary evidence and bounded-treewidth binary evidence are domain-liftable.
\end{theorem}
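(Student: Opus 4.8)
The plan is to reduce the statement to the already-established \UFOtwo{} case, \Cref{thm:ufo2}, by invoking the modular transformations to \UFOtwo{} that exist for \FOtwo{} \citep{WFOMC-FO2} and for \Ctwo{} with cardinality constraints \citep{WFOMC-C2}. Such a transformation hands us, for the input sentence $\sentence$, a \UFOtwo{} sentence $\sentence'$ over an extended vocabulary together with modified weighting functions $(\weight', \negweight')$, such that $\wfomc(\sentence, \domain, \weight, \negweight)$ is recovered from $\wfomc(\sentence', \domain, \weight', \negweight')$ by a fixed combining operation whose form is independent of the interpretation of the original predicates. First I would apply this transformation to $\sentence$ alone, setting the evidence aside, to obtain $\sentence'$.

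The next step is to observe that the transformation does not disturb the evidence structure. Every modular transformation to \UFOtwo{} only introduces \emph{fresh} auxiliary predicates (the Skolem-style predicates handling existential and counting quantifiers, and the tally predicates carrying the interpolation variables for cardinality constraints), and none of these lies in the vocabulary $\mathcal{P}'$ on which $\binevidence$ and $\uevidence$ are defined. Consequently the Gaifman graph $G_{\domain, \binevidence}$ is literally unchanged: the fresh predicates contribute no edges, so its treewidth remains the same constant $k$. I can therefore form $\sentence' \land \uevidence \land \binevidence$, whose binary evidence still has bounded treewidth, and apply \Cref{thm:ufo2} to compute $\wfomc(\sentence' \land \uevidence \land \binevidence, \domain, \weight', \negweight')$ in time polynomial in $n$.

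The combining operation is then applied \emph{after} conditioning. For cardinality constraints it is polynomial interpolation in the auxiliary weight variables: one evaluates the evidence-conditioned \UFOtwo{} WFOMC at a number of weight settings that is polynomial in $n$ and extracts the coefficient selecting the prescribed cardinalities; for counting quantifiers the reduction first passes through cardinality-constrained \UFOtwo{} and then through this interpolation. Since each evaluation is a single call to the \Cref{thm:ufo2} algorithm and there are only polynomially many such calls, the total running time remains polynomial in $n$, which is precisely the claimed domain-liftability.

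The main obstacle I anticipate is verifying that the modular identity \emph{survives conditioning}, i.e.\ that $\wfomc(\sentence \land \uevidence \land \binevidence, \domain, \weight, \negweight)$ is obtained from $\wfomc(\sentence' \land \uevidence \land \binevidence, \domain, \weight', \negweight')$ by the same combining operation that relates $\sentence$ to $\sentence'$ in the evidence-free setting. The way to discharge this is to note that each transformation sets up a weight-preserving correspondence between assignments to the \emph{original} predicates and weighted families of extensions on the fresh predicates, a correspondence that leaves the original-predicate part untouched; since $\binevidence$ and $\uevidence$ constrain only atoms over $\mathcal{P}'$, restricting both sides to evidence-consistent models preserves the correspondence term by term. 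The most delicate subcase is the counting-quantifier reduction, where the binary evidence deliberately breaks the domain symmetry that the original \Ctwo{} argument exploited; the resolution is that the reduction is purely syntactic at the level of the sentence, producing a \UFOtwo{} sentence whose WFOMC is afterwards computed by the symmetry-free, tree-decomposition-based algorithm of \Cref{thm:ufo2}, so no symmetry among domain elements is required downstream.
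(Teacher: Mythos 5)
Your proposal is correct and takes essentially the same route as the paper: the paper's proof simply cites the modular transformations to \UFOtwo{} for \FOtwo{} \citep{WFOMC-FO2} and for \Ctwo{} with cardinality constraints \citep{WFOMC-C2} and then applies \Cref{thm:ufo2}. The points you verify explicitly---that the fresh auxiliary predicates contribute no edges to the Gaifman graph, so its treewidth is unchanged, and that the reduction identity survives conjunction with $\uevidence \land \binevidence$---are precisely what the cited notion of modularity guarantees, so your argument is a fleshed-out version of the paper's brief proof.
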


%Our algorithm could be also applied to asymmetric WFOMC problems, whose details we defer to the full version due to space constraints.

\subsection{Asymmetric Weights for Binary Predicates}

Recall that for symmetric weights, we compute the weight of a set of literals as $W(L, \weight, \negweight):= \prod_{l \in L_T}\weight(\pred{l}) \cdot \prod_{l \in L_F}\negweight(\pred{l})$.
Suppose that there is a set of literals $L_a$ with asymmetric weights $w_a: L_a \to \mathbb{R}$ and $\overline{w}_a: L_a \to \mathbb{R}$ that may differ from $\weight(\pred{l})$ or $\negweight(\pred{l})$.
We can now compute the weight of a set of literals as
\begin{equation*}
\begin{aligned}
    W(L, \weight, \negweight) = \prod_{l \not \in L_a, l \in L_T}\weight(\pred{l}) \cdot \prod_{l \not \in L_a, l \in L_F}\negweight(\pred{l}) \cdot \prod_{l \in L_a, l \in L_T}w_a(l) \cdot \prod_{l \in L_a, l \in L_F}\overline{w}_a(l).
\end{aligned}
\end{equation*}
To ensure that our algorithm works with this modification correctly, we need to add an edge $(i, j)$ to the Gaifman graph for each literal $P(i, j) \in L_a$ or $\lnot P(i,j) \in L_a$, if the edge does not exist originally.

Then we can use the same techniques as in the symmetric case to compute asymmetric WFOMC.
The data complexity is still polynomial in the domain size, as long as the Gaifman graph of $L_a$ has bounded treewidth.

\section{Experiments}
We evaluated the performance of our algorithm on several examples and compared it with two other model counting algorithms, d4 \cite{d4}, Forclift \cite{forclift}, and Crane2 \cite{crane}.
d4 is a propositional model counter that compiles a formula in conjunctive normal form to deterministic decomposable negation normal form (d-DNNF). It generally performs well when compared to other model counters, but it cannot leverage lifted methods.
Forclift is a model counter similarly to our algorithm that computes lifted inference and accepts binary evidence.
%last minute update
Crane2 is a model counter based on Forclift, which performs very well on a limited set of problems, but does not support full range of \FOtwo{} sentences yet.
For this reason, Crane2 is only used for friends and smokers experiments.
All experiments were run on a laptop with Intel i7-10510U CPU and 32GB of RAM.
\subsection{Friends and Smokers}

The first problem we chose is a variant of the often used \emph{friends and smokers} problem with the following sentence:
\begin{equation*}
  \begin{aligned}
    \forall x:&\; \lnot friends(x, x) \\
	\land \forall x \forall y:&\; friends(x, y) \rightarrow friends(y, x) \\
	\land \forall x \forall y:&\; (smokes(x) \wedge friends(x, y)) \rightarrow smokes(y).
  \end{aligned}
\end{equation*}

In our experiments, we provide open-world evidence that corresponds to the persons coming in cliques of 3 (i.e., due to the open-world nature of the evidence, they may also have other friends beyond the cliques), and suppose $|\Delta| = 3\cdot k$ ($k \geq 1$).
We compared the runtime of our algorithm (labeled TD-WFOMC), d4, Forclift, and Crane2.
Figure \ref{fig:friends} shows that our algorithm runs significantly faster than the other three, and scales well to large domain size.
%last minute update
We forgo the symmetry of friendships for Crane2 as symmetry is not supported.
In Figure \ref{fig:friends-vary-td}, we report the runtime of our algorithm for fixed domain size of 60 while varying the sizes of friend cliques.
%We also report the performance of our algorithm on the friends and smokers problem for a fixed domain size of 60 while varying the sizes of friend cliques, as shown in Figure \ref{fig:friends-vary-td}.

\begin{figure}[t]
\centering
\includegraphics[width=0.7\textwidth]{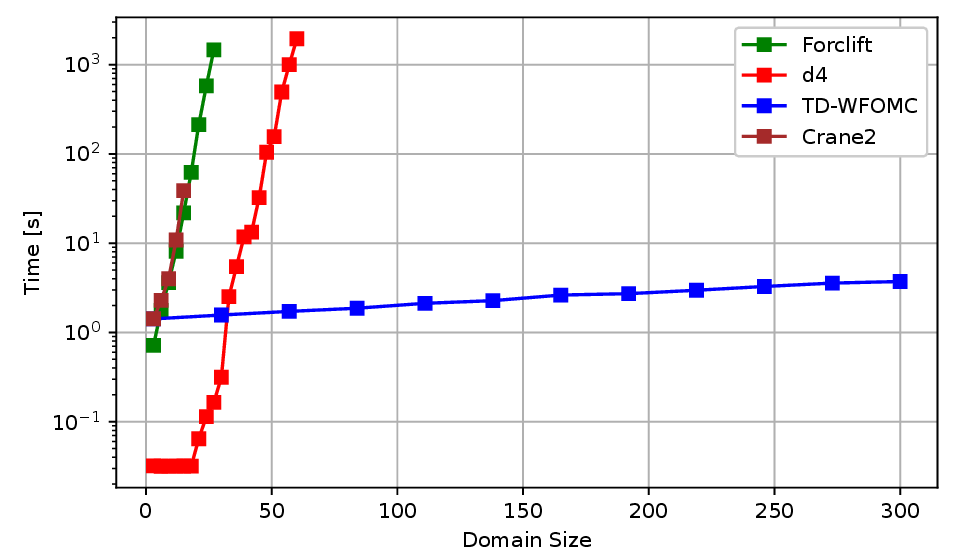} % Reduce the figure size so that it is slightly narrower than the column. Don't use precise values for figure width.This setup will avoid overfull boxes.
\caption{Time to compute the model count of friends and smokers problem with binary evidence}
\label{fig:friends}
%\end{figure}
\end{figure}

\begin{figure}[t]
\centering
\includegraphics[width=0.7\textwidth]{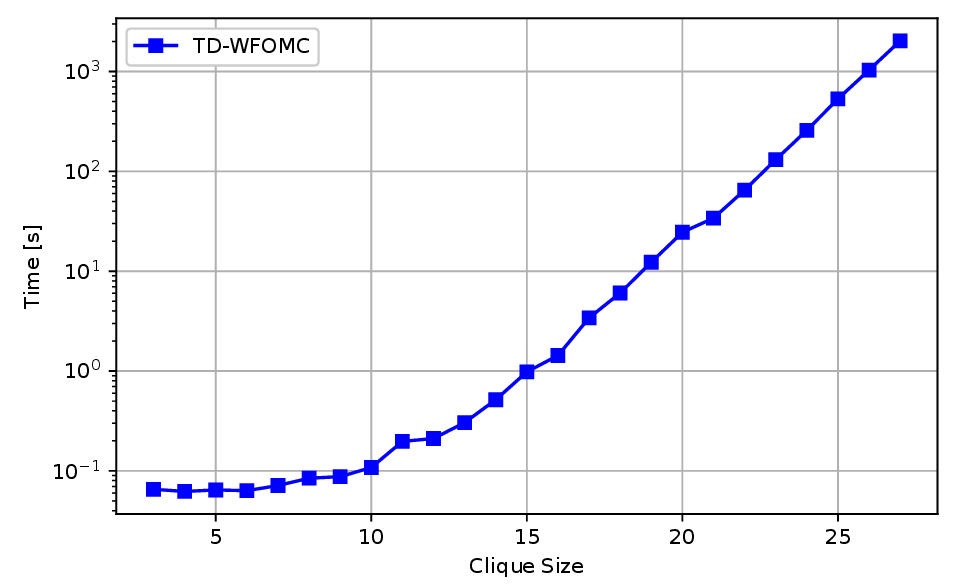} % Reduce the figure size so that it is slightly narrower than the column. Don't use precise values for figure width.This setup will avoid overfull boxes.
\caption{Time to compute the model count of friends and smokers problem with binary evidence, fixing $|\Delta| = 60$ for varying size of friend cliques.}
\label{fig:friends-vary-td}
%\end{figure}
\end{figure}

\subsection{Watts-Strogatz Model}
Another interesting problem we can tackle using WFOMC is performing inference on Watts-Strogatz (WS) model \cite{watts_collective_1998},
which is a random graph model widely used in network science to generate small-world networks.
The generation process of a random graph starts with a regular ring lattice with $N$ nodes, where each node is connected to exactly $K$ other nodes (assuming $K$ is an even integer), $K/2$ on each side, warping over the end.
To generate the random graph, each edge $(i, j)$
%such that $0 < (j - i) \mod N \le \frac{K}{2}$
is rewired with probability $\beta$ in some order, meaning that edge $(i, j)$ is replaced with edge $(i, k)$ ($i \neq k$) such that edge $(i, k)$ does not exist already.
The Watts-Strogatz model can be encoded as an MLN $\Psi$:
\begin{align*}
    &\infty, &\forall x : \neg WiredEdge(x, x)\\
    &\infty, &\forall x \forall y: \neg WiredEdge(x, y) \vee \neg WiredEdge(y, x)\\
    &\infty, &\forall x \exists_{=K/2} y: WiredEdge(x, y)\\
    %&\infty, &\forall x\forall y: \lnot WiredEdge(x,y) \vee \lnot WiredEdge(y, x) \\
    &w_1, &WiredEdge(x, y) \land \lnot EvidenceEdge(x ,y)\\
    &w_2, &WiredEdge(x, y) \land EvidenceEdge(x ,y)\\
    &\infty, & \forall x \forall y:  Edge(x ,y) \leftrightarrow \\
        &&(WiredEdge(x, y) \vee WiredEdge(y, x)),
\end{align*}
where $EvidenceEdge$ is interpreted as the starting regular ring lattice with binary evidence (in which each vertex has outgoing edges to only its right neighbors), $Edge(x, y)$ represents the rewired edges, and $w_1$ and $w_2$ depend on $\beta$.
We use wired edge to keep track of what edges are kept the same and what edges were changed, as well as to enforce that each node has degree $\ge K/2$, which holds in the Watts-Strogatz model.

Our algorithm can also handle random graphs created by this procedure regardless of what the starting graph is, as long as all the vertices share the same degree $d$ (which is also expected in the original model) and the starting graph has bounded treewidth, i.e., the Gaifman graph of binary evidence of $EvidenceEdge$ has bounded treewidth.
\iffalse
For a ring lattice $G(V_G, E_G)$ with $N$ nodes, each connected to $K$ neighbors, a tree decomposition $T(V_T, E_T)$ of treewidth $K$.
Let $T$ be a path graph $u_1, \cdots, u_{T - K}$ for all $u_i \in V_T$.
Let bag $B_{u_i}$ of vertex $u_i \in V_T$ be $\{v_1, \cdots, v_{K/2}, v_i, v_{i + 1}, \cdots, v_{i + K/2}\}$.
For each edge $(i, j) \in E_G$, it holds that $|i - j| \le K/2$.
Let $$ s = \begin{cases}
    min(i, j) & \text{if } i > K/2 \text{ and } j > K/2 \\
    i & \text{if } j \le K/2 < i\\
    j & \text{if } i \le K/2 < j\\
    1 & \text{otherwise}
\end{cases}$$
Take the bag $B_{u_{s}} = \{v_1, \cdots, v_{K/2}, v_s, v_{s + 1}, \cdots, v_{s + K/2}\}$.
If $(i, j) \not \in B_{u_{s}}$, then $|i - j| > K/2$, which is contradiction to the definition of ring lattice $G$.
\fi
In our experiments, we show this by considering the starting graph, where each node belongs to exactly one clique of size 3.
This also requires us to modify the rewiring procedure, which we change accordingly.

We then build the MLN used in our experiments by adding the following weighted formulas to $\Psi$:
\begin{equation}
    \label{eq:ws_friends}
    \begin{aligned}
        &\infty, &\forall x\forall y: friends(x,y) \leftrightarrow Edge(x, y) \\
        &w, &smokes(x) \land friend(x,y) \rightarrow smokes(y),
    \end{aligned}
\end{equation}
where the wired edges in the Watts-Strogatz model forms a friendship network.

We also consider a simplified version of the Watts-Strogatz model, where we start with a ring lattice (or a different graph as above) and instead of rewiring edges, we introduce $M$ new edges, creating shortcuts in the graph.
Its MLN is
\begin{align*}
    &\infty, &\forall x\forall y: \lnot WiredEdge(x,y) \vee \lnot WiredEdge(y, x) \\
    &\infty, &EvidenceEdge(x, y) \rightarrow WiredEdge(x, y)
\end{align*}
with the cardinality constraint $|WiredEdge| = 2\cdot |\Delta| + M$.

The results of our experiments can be seen in Figure \ref{fig:ws}.

% In our paper, we compute the inference over the Watts-Strogatz graph encoded to a \Ctwo{} theory as well as over a simplified version of the model.
% In the simplified version of the model, we start with a ring lattice and instead of rewiring the edges, we only introduce $M$ shortcuts through the graph.
% We use binary evidence to enforce the starting graph to be a ring lattice or cliques of size 3 in our experiments.
% Since Forclift does not support counting quantifiers, we have to encode counting using binary evidence, which makes it underperform.
% The results of our experiments can be seen in Figure \ref{fig:ws}.

\begin{figure}[t]
\centering
\includegraphics[width=0.7\textwidth]{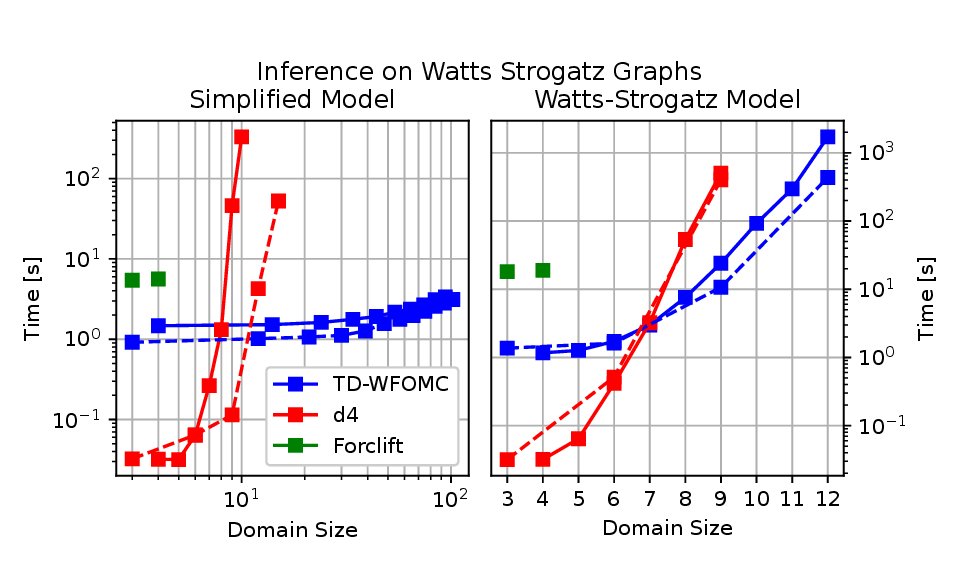} % Reduce the figure size so that it is slightly narrower than the column. Don't use precise values for figure width.This setup will avoid overfull boxes.
\caption{Inference times on Watts-Strogatz model, $K = 2$ and the simplified model, $K = 2$ and $M = \lfloor \frac{|\Delta| + 1}{2} \rfloor$. We use full line for a ring lattice starting graph and dashed line for cliques starting graph.}
\label{fig:ws}
%\end{figure}
\end{figure}

\section{New Results on the Stable Seating Arrangement Problem}

In the stable seating arrangement problem we have a group of agents $\mathcal{A} = [n]$ and a seating graph $G = (V_G, E_G)$.
In this and the following section, we use $[n]$ as a shorthand for $\{1, \cdots, n \}$.
Let $d$ be the maximum degree of $G$.
We use $N_G(v)$ to denote the neighborhood of $v \in V_G$.
Each agent needs to be assigned a different seat (a vertex of graph G) and we assume $n = |V_G|$.

Each agent $a \in \mathcal{A}$ has a \textit{preference} $p_a: \mathcal{A} \setminus \{a\} \mapsto \mathbb{R}$ of all other agents.
The preference $p_a(b)$ represents how much utility agent $a$ gets by sitting next to agent $b$.
A \textit{preference profile} is a collection of agent preferences, denoted by $\mathcal{P} = (p_a)_{a \in \mathcal{A}}$.
A \textit{class} of agents is a subset of agents $K \subseteq \mathcal{A}$ such that all agents in it share a common preference function $p_{K}: \mathcal{A} \mapsto \mathbb{R}$ and no agent distinguishes between others in the class.
A preference profile $\mathcal{P}$ is \emph{\textit{k}-class} if it can be partitioned into $k$ classes $K_1, K_2, \dots, K_k$ such that $\mathcal{A} = K_1 \cup K_2 \cup \dots \cup K_k$.
We denote the set of all classes by $\mathcal{K} = \{K_i | i \in [k]\}$.
We also use the notation $\mathcal{K}^* = \mathcal{K} \cup K_0$, where $K_0$ is a special class such that it holds that $p_{K_i}(K_0) = 0$ for all $i \in [k]$.
We use class $K_0$ to mark missing neighbors for seats in the seating graph that have fewer than $d$ neighbors.

An \emph{arrangement} is a bijection $\pi : \mathcal{A} \mapsto V_G$ which assigns each agent a seat in the seating graph $G$.
Given an arrangement $\pi$, the utility of each agent $a \in \mathcal{A}$ is $U_a(\pi) = \sum_{v \in N_G(\pi(a))}p_a(\pi^{-1}(v))$.
An agent $a$ \textit{envies} another agent $b$ if $U_a(\pi) < U_{a}(\pi')$ where $\pi'$ is obtained from $\pi$ by swapping the positions of agents $a$ and $b$.
An arrangement is called \emph{envy-free} if no agent envies another agent, and called \emph{stable} if no two agents envy each other.

\begin{definition}[Stable and Envy-Free Seating Arrangements]
A $k$-class stable dinner seating arrangement problem $SSA(\mathcal{A}, \mathcal{P}, G, \mathcal{K})$ is a problem of finding a stable seating arrangement $\pi$. $\#SSA(\mathcal{A}, \mathcal{P}, G, \mathcal{K})$ is the corresponding problem of counting the number of stable arrangements of $SSA(\mathcal{A}, \mathcal{P}, G, \mathcal{K})$.

$\#EFSA(\mathcal{A}, \mathcal{P}, G, \mathcal{K})$ is defined analogously with respect to envy-free seating arrangements.
\end{definition}

The complexity of finding stable seating arrangements in various settings has been studied \cite{Bullinger_Suksompong_2023,Bodlaender_Hanaka_2025,Roger_2023}.
\citet{Roger_2023} uses the notion of $k$-class stable seating arrangement, where each agent belongs to one of $k$ classes, and each class is a group of agents that share the same utility function and are also indistinguishable to each other in the group.
As established by \citet{Roger_2023}, finding a stable or envy-free seating arrangement for cyclic or path-shaped tables is solvable in polynomial time when the number of classes is fixed.

% In \Cref{thm:seatings} presented below, we extend this positive result to the realm of counting problems and more complex table shapes, demonstrating that counting the number of stable seating arrangements on bounded-treewidth bounded-degree seating graphs, which includes the cyclic or path-shaped tables, is also tractable, provided the number of classes remains fixed. Hence, \Cref{thm:seatings} also answers the open problem of the $2\times n$ grid table in \cite{Roger_2023}, as the seating graph has treewidth 2, and also generalizes the results to more complex table shapes.

In \Cref{thm:seatings} below, we extend this positive result to counting problems and more complex table configurations. Specifically, we show that counting the number of stable seating arrangements is tractable on seating graphs of bounded treewidth and bounded degree---conditions that include cycles and paths---provided the number of classes is fixed. This result also answers the open question posed in \cite{Roger_2023} concerning the $2 \times n$ grid table, whose seating graph has treewidth 2, and further generalizes the tractability to a broader class of table structures.

\begin{theorem}
    \label{thm:seatings}
    Counting the number of stable (or envy-free) seating arrangements is in time polynomial in the number of agents, if the problem is $k$-class and the seating graph has bounded treewidth and degree.
\end{theorem}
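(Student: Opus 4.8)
The plan is to reduce $\#SSA$ (and $\#EFSA$) to a single \Ctwo{} model count with cardinality constraints whose binary evidence is exactly the seating graph $G$, and then invoke the domain-liftability of \Ctwo{} with cardinality constraints, unary evidence, and bounded-treewidth binary evidence established above. Take the domain $\domain = V_G$ and introduce a binary predicate $Adj$ whose closed-world evidence $\binevidence$ encodes the edge set $E_G$; by construction the Gaifman graph of $\binevidence$ is $G$, so it has bounded treewidth. For each class $K_i$ introduce a unary predicate $Class_i$, add the sentence forcing exactly one $Class_i$ to hold at every seat, and impose the cardinality constraints $|Class_i| = |K_i|$. A model is then precisely a coloring $\chi : V_G \to [k]$ placing $|K_i|$ agents of class $i$. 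The first observation is that stability (and envy-freeness) depends only on $\chi$: since agents in a class share the same preference function and are indistinguishable to every other agent, swapping two agents of the same class changes no utility, so all $\prod_i |K_i|!$ bijections realizing a fixed coloring share the same stability status. Hence $\#SSA = \big(\prod_i |K_i|!\big)\cdot N$, where $N$ is the number of stable colorings, and it suffices to compute $N$ by \wfomc{} with unit weights.

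The heart of the argument is to express stability as a \emph{fixed} \Ctwo{} sentence, and the key simplification comes from bounded degree. The utility of the agent at seat $x$ is $\sum_{w:\,Adj(x,w)} p_{c_x}(c_w) = \sum_{j=1}^{k} n_j(x)\, p_{c_x}(j)$, where $c_x$ is the class at $x$ and $n_j(x)$ counts the class-$j$ neighbors of $x$ (missing neighbors contribute $0$ because $p_{K_i}(K_0)=0$). Since $\deg_G \le d$, the profile $(c_x, n_1(x),\dots,n_k(x))$ ranges over only a constant number of values. I will introduce one unary predicate $Profile_s$ per profile $s$, defined by the \Ctwo{} sentence conjoining $Class_{c(s)}(x)$ with the counting formulas $\exists_{=n_j(s)}\, y\,(Adj(x,y)\wedge Class_j(y))$ over all $j$; the counting bounds are $\le d$, so this sentence has constant size. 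Each seat then carries exactly one profile, and its utility is a constant determined by that profile. This step is what lets the later pairwise test stay inside the two-variable fragment, since the neighborhood summation is pushed into these separate definitional sentences.

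With profiles available as unary predicates, the envy relation becomes a finite boolean function of the two profiles and the adjacency bit. A short case analysis shows that the utility of the agent at $x$ after swapping with the agent at $y$ equals $\sum_j n_j(y)\,p_{c_x}(j)$, corrected by $[Adj(x,y)]\,(p_{c_x}(c_y)-p_{c_x}(c_x))$ to account for the neighbor of $y$'s seat that now carries class $c_y$; comparing this against the current utility decides envy. All quantities involved ($c_x,c_y$, the count vectors, and the adjacency bit) are determined by the profiles of $x$ and $y$ and by $Adj(x,y)$, each ranging over a constant set, so the profile pairs that block when adjacent and those that block when non-adjacent can be precomputed. Then ``no blocking pair'' is the fixed \FOtwo{} sentence
\begin{equation*}
\forall x \forall y:\ \bigwedge_{(s,t)\in \mathrm{Bad}^{+}} \lnot\big(Profile_s(x)\wedge Profile_t(y)\wedge Adj(x,y)\big)\ \wedge\ \bigwedge_{(s,t)\in \mathrm{Bad}^{-}} \lnot\big(Profile_s(x)\wedge Profile_t(y)\wedge \lnot Adj(x,y)\big),
\end{equation*}
where $\mathrm{Bad}^{+}$ (resp.\ $\mathrm{Bad}^{-}$) collects the mutually-envious profile pairs for adjacent (resp.\ non-adjacent) seats. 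For $\#EFSA$ the same construction applies with the blocking condition replaced by one-sided envy.

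Putting the pieces together, the conjunction of the one-class sentence, the profile definitions, and the blocking-pair sentence is a fixed \Ctwo{} sentence; together with the cardinality constraints $|Class_i|=|K_i|$, empty unary evidence, and the bounded-treewidth binary evidence $\binevidence$ encoding $G$, the domain-liftability result computes $N=\wfomc(\cdot)$ in time polynomial in $n$, and $\#SSA = \big(\prod_i |K_i|!\big)\,N$ follows. The main obstacle is precisely that the two-variable fragment cannot directly sum over a seat's neighborhood while simultaneously binding a partner seat; bounded degree resolves it by making the set of neighborhood profiles finite, so they can be precomputed as unary predicates via separate counting sentences, after which the pairwise stability test uses only those unary predicates and $Adj$. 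Bounded treewidth is used only to keep the binary evidence tractable, while bounded degree together with fixed $k$ is what keeps the encoding sentence of constant size, as required for data complexity.
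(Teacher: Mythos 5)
Your reduction is correct and shares the paper's overall skeleton---domain $V_G$, closed-world binary evidence whose Gaifman graph is the seating graph, unary class predicates with cardinality constraints $|class\_s| = |s|$, bounded degree used to turn neighborhood information into finitely many unary ``profiles,'' and the final factor $\prod_{s}|s|!$---but your encoding of the profiles takes a genuinely different route. The paper stays inside \FOtwo{}: it splits adjacency into $d$ labeled edge predicates $neighbor\_i/2$ (each vertex's incident edges get distinct labels), transfers the class of the $i$-th neighbor into unary predicates via $\forall x \forall y: (class\_s(y) \wedge neighbor\_i(x,y)) \to neighbor\_i\_is\_s(x)$ together with exactly-one axioms and unary evidence $neighbor\_i\_is\_K_0$ for missing slots, and then defines an explicit $envies/2$ predicate as a disjunction over profile tuples, with stability expressed as $\forall x \forall y: \lnot envies(x,y) \vee \lnot envies(y,x)$. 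You instead keep a single $Adj$ predicate and define count-based profiles with counting quantifiers $\exists_{=m}$, landing in \Ctwo{}; both fragments are covered by the paper's liftability theorem, so either works. Your route buys a leaner signature (no edge labeling, no per-slot predicates) at the cost of routing through the \Ctwo{}-to-\UFOtwo{} reduction, while the paper avoids counting quantifiers entirely, which matters for the constants hidden in the data complexity. One point where your write-up is actually more careful than the paper's: you include the correction term $[Adj(x,y)]\left(p_{c_x}(c_y) - p_{c_x}(c_x)\right)$ for the post-swap utility when the two seats are adjacent, and split the bad profile pairs into adjacent and non-adjacent cases accordingly. The paper's set $\mathcal{F}$ compares $\sum_{i} p_{s}(t_i) < \sum_{i} p_s(u_i)$ with no such correction, i.e., it evaluates the utility the agent at $x$ would get at seat $y$ as if seat $x$ still held an agent of class $s$ after the swap; under the paper's own swap-based definition of envy this is inaccurate exactly for adjacent pairs, so your case split is a genuine refinement rather than an unnecessary complication.
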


% It is known due to \citet{Roger_2023} that finding a stable or envy-free seating arrangement on tables that are cycles or paths is NP-complete.
% However, this problem becomes tractable if we fix the number of classes.

\begin{proof}
Let $\#SSA(\mathcal{A}, \mathcal{P}, G, \mathcal{K})$ and $\#EFSA(\mathcal{A}, \mathcal{P}, G, \mathcal{K})$ be the counting version of the stable and envy-free seating arrangements problems respectively.
We first show that there is an \FOtwo{} sentence $\Psi$, unary evidence $\mathcal{U}$, binary evidence $\mathcal{E}$, and a set of cardinality constraints $\mathcal{C}$, such that $\#SSA(\mathcal{A}, \\ \mathcal{P}, G, \mathcal{K})$ can be obtained from $\wfomc(\Psi \wedge \mathcal{U} \wedge \mathcal{E} \wedge \mathcal{C}, V_G, \weight, \negweight)$ in time polynomial in $|V_G|$.
%$WFOMC(\Psi \wedge \mathcal{U} \wedge \mathcal{E} \wedge \mathcal{C}, V_G, \weight, \negweight) \cdot \prod_{s \in \mathcal{K}} |s|!$, where $WFOMC(\Psi \wedge \mathcal{U} \wedge \mathcal{E} \wedge \mathcal{C}, V_G, \weight, \negweight)$ is domain liftable.

For a predicate $P$, we use $P/k$ to indicate that the arity of $P$ is $k$. The predicates in the formula $\Psi$ are:
\begin{itemize}
    \item $neighbor\_i/2$ ($i \in [d]$), which are used to encode the seating graph.
    The $neighbor\_i$ predicates are fully specified using closed-world evidence.
    For each vertex $u \in V_G$, we arbitrarily label all its outgoing edges so that no two edges share the same label $i$.
    This is useful when we later distinguish the individual neighbors of $u$.
    \item $class\_s/1$ ($s\in \mathcal{K}$), where $class\_s(x)$ indicates that the class of the agent sitting at vertex $x$ is $s$.
    \item $neighbor \_i\_is\_s/1$ ($i \in [d], \ s \in \mathcal{K}^*$), where $neighbor \_i\_is\_s(x)$ denotes that the $i$-th neighbor of the vertex $x$ is of class $s$. Recall that $\mathcal{K}^* = \mathcal{K} \cup K_0$ where $K_0$ is a special class such that $p_{s}(K_0) = 0$ for all $s \in \mathcal{K}$. We use $neighbor \_i\_is\_K_0(x)$ in case the degree $d_a$ of $x$ is less than the maximum degree $d$ of $G$.
    \item $envies/2$, where $envies(x,y)$ indicates that the agent at seat $x$ envies the agent at seat $y$.
\end{itemize}
We encode the seating graph to closed-world binary evidence $\mathcal{E}$.
Let
\begin{equation*}
  \mathcal{E} = \bigcup_{a \in V_G} \{neighbor\_i(a, b_i) \;|\; i \in [d_a], (a, b_i) \in E_G\},
\end{equation*}
where $d_a$ is the degree of $a$.
Recall that we label the outgoing edges of $a$ in such a way that $a$ is connected to each of its neighbors using a differently labeled edge.
\iffalse
$\mathcal{E} = \{neighbor\_i(a, b) \}$, where $(a, b) \in E_G$, such that it holds that for all $b < c$ where $b, c \in [n]$, if $neighbor\_i(a, b \in \mathcal{E})$ and $neighbor\_j(a, c) \in \mathcal{E}$, then $i < j$.
$\mathcal{E} = \{neighbor\_i(a, b) \}$, where $(a, b) \in E_G$, such that .
Also, for each vertex $a$ with degree $d_a < d$, it holds that if $neighbor\_i(a, b) \in \mathcal{E}$, then $i \in [d_a]$.
\fi
The unary evidence $\mathcal{U}$ is only necessary for seating graphs with at least one vertex $a$ with degree $d_a < d$. In such case, we define
\begin{equation*}
  \mathcal{U} = \bigcup_{a \in V_G} \{neighbor\_i\_is\_K_0(a) \;|\; d_a < i \le d\}.
\end{equation*}
%, where $d$ is the maximum degree of $G$ and $d_a$ is the degree of $a$.
Let $\Phi$ be a first-order sentence
%{\allowdisplaybreaks
\begin{align}
    \Phi =
    & \forall x : \bigvee_{s \in \mathcal{K}} class\_s(x) \label{ssa:phi:1}\\
    \land & \forall x: \bigwedge_{s \in \mathcal{K}, t \in \mathcal{K}, s \ne t} \lnot class\_s(x) \vee \lnot class\_t(x) \label{ssa:phi:2}\\
    \land & \forall x: \bigwedge_{i=1}^d \bigvee_{s \in \mathcal{K}^*} neighbor\_i\_is\_s(x) \label{ssa:phi:3}\\
    \land & \forall x: \bigwedge_{i=1}^d \ \bigwedge_{s,t \in \mathcal{K}^*, s \neq t} \Big(\lnot neighbor\_i\_is\_s(x) \vee \lnot neighbor\_i\_is\_t(x)\Big) \label{ssa:phi:4}\\
    \land & \bigwedge_{i \in [d], s \in \mathcal{K}} \forall x \forall y: (class\_s(y) \wedge neighbor\_i(x, y)) \to neighbor\_i\_is\_s(x). \label{ssa:phi:5}
\end{align}
%}

Let $\mathcal{C} = \left\{|class\_s| = |s| \big| s \in \mathcal{K}\right\}$ be a set of cardinality constraints and let $\weight(P) = \negweight(P) = 1$ for all predicates $P$ in $\Phi$.
Then $\wfomc(\Phi \wedge \mathcal{U} \wedge \mathcal{E} \wedge \mathcal{C},V_G, \weight, \negweight) \cdot \prod_{s \in \mathcal{K}}|s|!$ counts the number of possible seating arrangements (possibly unstable or not envy-free) for the set of agents $\mathcal{A}$ at a table given by the seating graph $G$.
%Both the binary evidence $\mathcal{E}$ and the unary evidence $\mathcal{U}$ are fixed and the predicates $neighbor\_i\_is\_s(x), i \in [d], s \in \mathcal{K}^*$ follow from the classes of agents at each seat (\ref{ssa:phi:3}, \ref{ssa:phi:4}, \ref{ssa:phi:5}). This means we count the permutations of classes of agents assigned to seats. We set the correct number of agents using $\mathcal{C}$ and add
Here, the term $\prod_{s \in \mathcal{K}}|s|!$ counts for each class the number of arrangements of the agents to the vertices of the class.
%This is precisely  the number of all seating arrangements (possibly unstable) from the definition in Section D.

Next, let
\begin{equation*}
  \begin{aligned}
  \mathcal{F} = \Big\{ & class\_s(x) \wedge \bigwedge_{i  \in [d], t_i \in \mathcal{K}^*} neighbor\_i\_is\_t_i(x)
     \land \bigwedge_{i \in [d], u_i \in \mathcal{K}^*} neighbor\_i\_is\_u_i(y) \\
    & \big| s\in \mathcal{K}, \sum_{i \in [d]} p_{s}(t_i) < \sum_{i \in [d]} p_s(u_i)\Big\}
  \end{aligned}
\end{equation*}
and
\begin{align*}
    \Gamma = \forall x \forall y:\;& \neg envies(x, y) \vee \neg envies(x, y) \\
    \land \forall x \forall y:\;& envies(x, y) \leftrightarrow \left(
 	\bigvee_{f \in \mathcal{F}} f(x, y)\right).
\end{align*}

Since $\Gamma$ filters out all unstable seating arrangements, we can see that if
$\Psi = \Phi \wedge \Gamma$, then
\begin{equation*}
\begin{aligned}
   \#SSA(\mathcal{A}, \mathcal{P}, G, \mathcal{K})
  = \wfomc(\Psi \wedge \mathcal{U} \wedge \mathcal{E} \wedge \mathcal{C}, V_G, \weight, \negweight) \cdot \prod_{s \in \mathcal{K}} |s|!.
\end{aligned}
\end{equation*}

Since the formula $\Psi$ is fixed and depends only on the number of classes and the maximum degree of the seating graph, we can say that due to results in Section 3.2, we prove that $\#SSA(\mathcal{A}, \mathcal{P}, G, \mathcal{K})$ can be computed in time polynomial in the number of agents if the maximum degree $d$ of the seating graph $G$ and the treewidth $t$ of $G$ are fixed.

We can use the same proof for computing the number of envy-free seating arrangements $\#EFSA(\mathcal{A}, \mathcal{P}, G, \mathcal{K})$ if instead of $\forall x \forall y: \neg envies(x ,y) \vee \neg envies(y, x)$ we require only $\forall x \forall y: \neg envies(x, y)$.
\end{proof}

\section{Conclusion}

We provide a novel approach to tackle the symmetry limitation in WFOMC by introducing an algorithm to compute WFOMC of \FOtwo{} and \Ctwo{} with cardinality constraints, unary evidence and binary evidence where the underlying Gaifman graph is of bounded treewidth in time polynomial in the domain size. The algorithm applies to the counting problem in combinatorics that can be encoded by WFOMC of such fragment, e.g., the stable seating arrangement problem. We hope this work inspires further advances to asymmetric lifted inference, and establishes a general counting method on bounded-treewidth graphs.
% analogous to Courcelle's theorem.

\section*{Acknowledgements}

V\'{a}clav Kůla is supported by the  Central Europe Leuven Strategic Alliance (CELSA) project {\em Towards Scalable Algorithms for Neuro-Symbolic AI}.
Yuanhong Wang is supported by National Natural Science Foundation of China (No.62506141). Ond\v{r}ej Ku\v{z}elka is supported by the Czech Science Foundation project 23-07299S ({\em Statistical Relational Learning in Dynamic Domains}).

\bibliographystyle{apalike}
\bibliography{ref}

\end{document}